\renewcommand{\l}{\lambda}
\newcommand{\Lh}{\hat{L}}
\newcommand{\tr}{\mbox{Tr }}
\newcommand{\x}{\mathbf{x}}
\newcommand{\y}{\mathbf{y}}
\newcommand{\xh}{\hat{\mathbf{x}}}
\newcommand{\Yh}{\hat{Y}}
\newcommand{\w}{\omega}
\newtheorem{theorem}{Theorem}[section]
\newtheorem{lemma}[theorem]{Lemma}
\newtheorem{definition}[theorem]{Definition}
\begin{document}

\title{Series expansions from the corner transfer matrix renormalization group method: the hard squares model}
\author{\textbf{Yao-ban Chan} \\ \normalsize LaBRI, Universit\'e Bordeaux 1\footnote{Much of this work was done at the Department of Mathematics and Statistics, The University of Melbourne.}\\ \texttt{\normalsize chan@labri.fr}}
\date{}

\maketitle

\begin{abstract}
The corner transfer matrix renormalization group method is an efficient method for evaluating physical quantities in statistical mechanical models. It originates from Baxter's corner transfer matrix equations and method, and was developed by Nishino and Okunishi in 1996. In this paper, we review and adapt this method, previously used for numerical calculations, to derive series expansions. We use this to calculate 92 terms of the partition function of the hard squares model. We also examine the claim that the method is subexponential in the number of generated terms and briefly analyse the resulting series.
\end{abstract}

\emph{Keywords: } Corner transfer matrix, hard squares model, series expansions.

\section{Introduction}

In this paper, we adapt an efficient numerical method --- the corner transfer matrix renormalization group, or CTMRG for short --- for the purpose of generating series expansions for the properties of various statistical mechanical models.

Much work has been devoted in recent years to finding the most accurate (i.e. largest number of correct terms) series for a variety of models. One such model is the hard squares model. In this model, each spin can take the value 0 or 1, denoting an `empty' or 'occupied' site. The weight of a configuration is $z$ to the power of the number of 1 spins, and the hard squares constraint restricts the configurations so that no two 1 spins can be directly adjacent. We wish to find the partition function of the model,
\[Z_N = \sum_{\{\sigma_i\}} z^{\sum_i \sigma_i},\]
where the outside sum is over all valid configurations, and in particular the expansion in $z$ of the partition function per site:
\[\kappa = \lim_{N \rightarrow \infty} Z_N^{1/N} = 1 + z - 2z + 8z^2 - 40 z^3 + 225 z^4 - \ldots\]

The current `state of the art' for finding series expansions is the finite lattice method, pioneered by de Neef and Enting (\cite{CTM:FLM, CTM:deNeef}) and subsequently developed and refined by Jensen, Guttmann, and Enting (\cite{CTM:FLM-JGE1, CTM:FLM-JGE2}). This method takes advantage of the fact that the infinite partition function series can be approximated up to a certain order by an expression involving the partition functions of several finite lattices.

By itself, the finite-lattice approximation to $\kappa$ is not particularly effective, but the method derives its power by using transfer matrices to calculate the partition function of the finite lattices. These are matrices, denoted by $V$ and indexed by the values of a column of spins, which contain the Boltzmann weight of a column, given the spins on the sides. Multiplication by $V$ `adds' the weight of an extra column, so the partition function of the entire lattice can be built by repeating such multiplications. This can be used either to calculate the partition function of finite lattices, for the finite lattice method, or can be extended to the thermodynamic limit, where the partition function can be expressed as
\[Z = \lim_{N \rightarrow \infty} \tr V^N.\]
Therefore the partition function can be derived directly from the largest eigenvalue of the transfer matrix.

The CTMRG uses a similar concept of transfer matrices, but with a twist. Instead of containing the weight of a column, corner transfer matrices contain the weight of a full quarter of the lattice, given the spins at the boundaries. This means that the partition function is expressible as the sum of 4th powers of corner transfer matrices. The advantage of this representation is that approximating the `true' infinite-dimensional matrices by finite-size matrices is often very accurate, even for small matrices. The disadvantage is that we must evaluate the full eigenvalue spectrum of the corner transfer matrices, rather than simply the largest eigenvalue.

The CTMRG method is quite general, and can in theory be applied to any model where the Boltzmann weight of a configuration can be expressed as the product of weights of a single cell. We call such models interaction round a face (IRF) models. In terms of such a model, the face weight of the hard squares model is given by
\[\w\left(\begin{array}{cc} a & b \\ c & d \end{array}\right) = \left\{ \begin{array}{rl} 0 & \mbox{if $a=b=1$, $a=c=1$, $b=d=1$ or $c=d=1$} \\ z^{(a+b+c+d)/4} & \mbox{otherwise.} \end{array} \right.\]
We note that the factor of $1/4$ comes from the fact that each spin lies in 4 cell faces.

The original corner transfer matrix method was developed by Baxter, Enting, and various co-authors starting from 1978. In \cite{CTM:1}, Baxter developed the corner transfer matrix equations, which underpin all CTM-related methods. We discuss these equations in detail below. He also developed the corner transfer matrix method, which involved transforming the CTM equations and iterating through them until a solution was reached.

This method was applied to the Ising model for low-temperature and high-field series expansions (\cite{CTM:Ising, CTM:Ising2}), as well as the hard squares model (\cite{CTM:Hsq}). Using what would be considered today as insignificant computational power, they managed to extract a very large number of series terms for these models (in fact, we know of no longer series expansions for the hard squares model). More recently, Baxter used this method to numerically calculate the partition function for hard particle models at $z = 1$ to high precision (\cite{CTM:Hsq2}).

A notable triumph of the corner transfer matrix approach was Baxter's exact solution of the hard hexagons model (\cite{CTM:HexExact}), which he derived by noticing a pattern in the eigenvalues of the corner transfer matrices. He was then able to prove the validity of this pattern and thus solve the model.

Other applications of this method by Baxter included the 8-vertex model (\cite{CTM:8vertex1, CTM:8vertex2}), the 3d Ising model for one-dimensional matrices (\cite{CTM:3d}), and the chiral Potts model (\cite{CTM:Potts1, CTM:Potts2, CTM:Potts3}). However, in each case the actual transformations applied were model-specific, so the method remained relatively limited in application.

In 1996, Nishino and Okunishi (\cite{CTM:RG1, CTM:RG3, CTM:RG2}) used the CTM equations to derive the corner transfer matrix renormalization group method (CTMRG). This method calculates finite-size approximations to the solution of the CTM equations using a principle derived from the density matrix renormalization group method. Nishino and Okunishi applied this method numerically to various models --- the $q=5$ Potts model (\cite{CTM:RG-App1}), the 3d Ising model (\cite{CTM:RG-App4}), and the spin-$\frac{3}{2}$ Ising model (\cite{CTM:RG-App7}). They also converted it to 3-dimensional lattices in \cite{CTM:RG-App2, CTM:RG-App5, CTM:RG-App6, CTM:RG-App10}, and studied the eigenvalue distribution of the CTM matrices in \cite{CTM:RG-App8}.

In 2003, Foster and Pinettes (\cite{CTM:RG-FP1, CTM:RG-FP2}) applied this method to the self-avoiding walk model, and more recently Mangazeev \emph{et al.} (\cite{CTM:Scaling, CTM:Scaling2}) used it to evaluate the scaling function of the Ising model in a magnetic field.

All of the above applications were for numerical calculations. As far as we know, no one has tried to use this method to derive series expansions. In principle, the method can be adapted to do this with no changes. However, in practice there are some implementational difficulties, notably that we must diagonalize a matrix of series. To our knowledge this has not been attempted before. These considerations give rise to some interesting mathematics and are the subject of this paper.

It is claimed (although not actually proved) that the method has complexity $O(\alpha^{\sqrt{n}})$, as opposed to the FLM which is an exponential-time method, albeit with a small growth constant. We will briefly examine this claim.

In this paper, we apply the CTMRG method to derive series for the hard squares model. In Section \ref{sec:method}, we revise the CTM equations and the CTMRG method. In Section \ref{sec:implementation}, we discuss adjustments and algorithms needed to apply the CTMRG for series expansions. These include diagonalization of matrices of series and block eigenvalues. In Section \ref{sec:convergence} we analyse the effectiveness of our method, and analyse the resulting series in Section \ref{sec:analysis}. Finally we offer a brief conclusion in Section \ref{sec:conclusion}.

\section{The corner transfer matrix renormalization group method}\label{sec:method}

\subsection{The CTM equations}

The CTMRG method is based on Baxter's CTM equations (\cite{CTM:1}), which we restate below. In these equations, $a$, $b$, $c$, and $d$ take all possible spin values, while $A(a)$, and $F(a,b)$ are $n \times n$ matrices. $\w$ is the weight of one cell, given the spins at its corners, and $\eta$ and $\xi$ are scalars.

\begin{eqnarray}
\xi A^2(a) & = & \sum_{b} F(a,b) A^2(b) F(b,a) \label{eq:ctm1} \\
\eta A(a) F(a,b) A(b) & = & \sum_{c,d} \w\left(\begin{array}{cc} a & b \\ c & d \end{array}\right) F(a,c) A(c) F(c,d) A(d) F(d,b). \label{eq:ctm2}
\end{eqnarray}

It can be shown (for example in \cite{CTM:Thesis}) that the infinite-dimensional solution to these equations gives the partition function per site by $\kappa = \eta/\xi$. At any finite dimension, the equations are consistent and provide a lower bound (and approximation) to $\kappa$.

The CTM equations can best be understood by their graphical interpretation, thinking of the matrices as transfer matrices. This is illustrated in Figure \ref{fig:mat}. The $A$ matrices are interpreted as the transfer matrix of a quarter of a plane (or \emph{corner transfer matrix}), given the value of the spin at the corner, whereas the $F$ matrices are interpreted as a `half-row' transfer matrix, given the two spins at the end.

\begin{figure}\center
\subfigure[$A(a)$]{\includegraphics[scale=0.8]{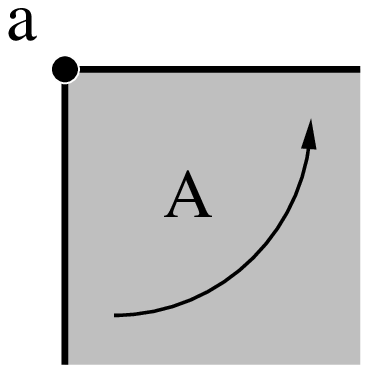}}
\subfigure[$F(a,b)$]{\includegraphics[scale=0.8]{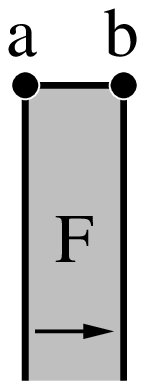}}
\caption{Graphical interpretation of the matrices in the CTM equations.}\label{fig:mat}
\end{figure}

With these interpretations of the matrices, the CTM equations can be intuitively seen to be correct, as shown in Figure \ref{fig:eqs}. Equation \ref{eq:ctm1} corresponds to adding a row onto half a plane, which multiplies the matrix by a constant factor but otherwise leaves it unchanged. Equation \ref{eq:ctm2} has a similar interpretation, but with the values of two spins fixed.

\begin{figure}\center
\subfigure[Equation \ref{eq:ctm1}]{\includegraphics[scale=0.7]{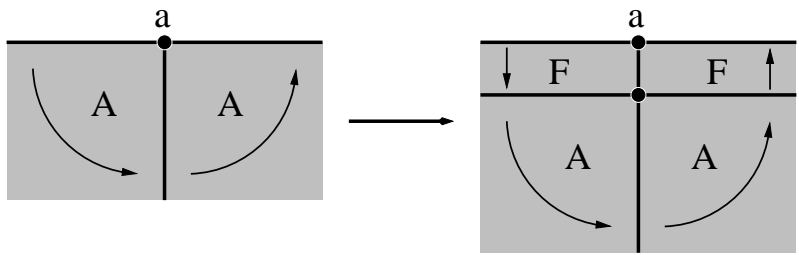}}
\subfigure[Equation \ref{eq:ctm2}]{\includegraphics[scale=0.7]{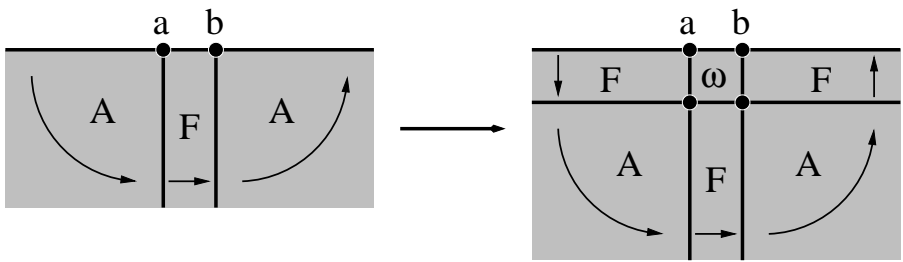}}
\caption{Graphical interpretation of the CTM equations.}\label{fig:eqs}
\end{figure}

Although this graphical interpretation suggests that each matrix be of dimension $2^p \times 2^p$ for some $p$, this is not actually necessary --- the equations hold at any size. Indeed, as we shall see later, the matrices do not all have to be of the same size for the equations to be consistent.

It is easily seen that the CTM equations are invariant under the transformations:
\begin{enumerate}
\item $A(a) \rightarrow c A(a)$;
\item $F(a,b) \rightarrow c F(a,b), \xi \rightarrow c^2 \xi, \eta \rightarrow c^2 \eta$; and
\item $A(a) \rightarrow P^T(a) A(a) P(a), F(a,b) \rightarrow P^T(a) F(a,b) P(b)$, where $P(a)$ is an orthogonal matrix of size $n \times n$.
\end{enumerate}
In particular, transformation 3 means that we can take $A(a)$ to be diagonal, with entries ordered from largest to smallest. Furthermore, transformations 1 and 2 imply that we can then take the top left entries of $A(0)$ and $F(0,0)$ to both be 1. Although this is not implied by the CTM equations, we can often take $F(a,b) = F^T(b,a)$, due to reflectional symmetry in the model.

\subsection{The renormalization group method}

The CTMRG method of Nishino and Okunishi (\cite{CTM:RG1}) calculates successive approximations to finite-size solutions of the CTM equations. The principle behind this method is that the finite-size solution maximises $\kappa$ with respect to the matrices. Since $\sum_a A^4(a)$ is the partition function of the entire plane, we wish to keep the maximum eigenvalues of the infinite-dimensional solution in the $A$ matrices. To do this we expand these matrices, and then diagonalise them. We then apply the diagonalising transformation, but keep only the largest eigenvalues, so that the matrices are shrunk back to their original size.

More specifically, given initial values for the $A$ and $F$ matrices, we expand our $A$ matrices via
\begin{equation}
A_l(a) = \left( \begin{array}{cc} \sum_b \w\left(\begin{array}{cc} a & 0 \\ 0 & b\end{array}\right) F(0,b)A(b)F(b,0) & \sum_b \w\left(\begin{array}{cc} a & 1 \\ 0 & b\end{array}\right) F(0,b)A(b)F(b,1) \\ \sum_b \w\left(\begin{array}{cc} a & 0 \\ 1 & b\end{array}\right) F(1,b)A(b)F(b,0) & \sum_b \w\left(\begin{array}{cc} a & 1 \\ 1 & b\end{array}\right) F(1,b)A(b)F(b,1) \end{array}\right). \label{eq:al}
\end{equation}

We expand the $F$ matrices in a similar fashion:
\begin{equation}
F_l(a,b) = \left( \begin{array}{cc} \w\left(\begin{array}{cc} a & b \\ 0 & 0 \end{array}\right) F(0,0) & \w\left(\begin{array}{cc} a & b \\ 0 & 1 \end{array}\right) F(0,1) \\ \w\left(\begin{array}{cc} a & b \\ 1 & 0 \end{array}\right) F(1,0) & \w\left(\begin{array}{cc} a & b \\ 1 & 1 \end{array}\right) F(1,1) \end{array} \right). \label{eq:fl}
\end{equation}

As before, these equations have graphical interpretations, shown in Figure \ref{fig:ctmrg}. We expand $A(a)$ by adding the weight of two half-rows and a single cell. This has the result of doubling the size of $A(a)$. We add the weight of a single cell to $F(a,b)$, which also doubles its size.

\begin{figure}\center
\subfigure[Expanding $A(a)$]{\includegraphics[scale=0.7]{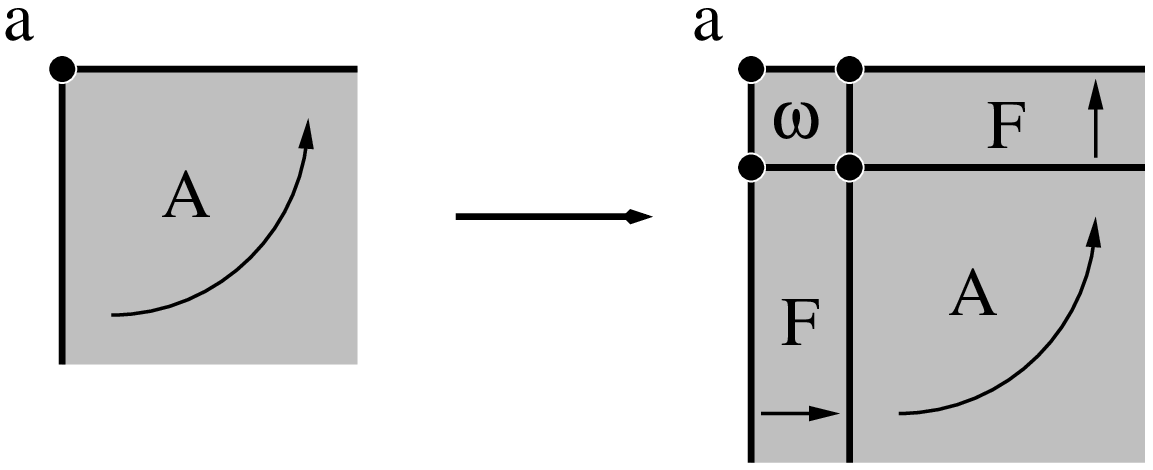}}
\subfigure[Expanding $F(a,b)$]{\includegraphics[scale=0.7]{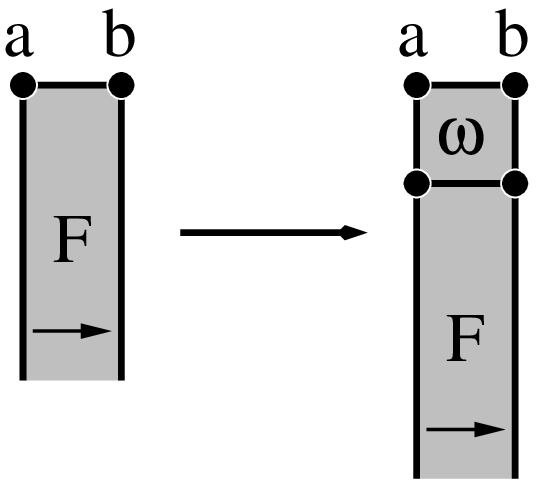}}
\caption{Graphical interpretation of matrix expansion in the CTMRG method.}\label{fig:ctmrg}
\end{figure}

Next we must reduce both these matrices. This is done by diagonalising $A_l(a)$, and then truncating the diagonalising matrix so that only the $n$ largest eigenvalues are kept. To reduce the size of the $F$ matrices, we apply a transformation consistent with transformation 3 in the previous section. If we wish to change the finite size of the solution, we can simply keep a larger number of eigenvalues of $A_l(a)$.

The CTMRG method can now be stated in full:

\begin{enumerate}
\item Start with initial approximations for $A(a)$ and $F(a,b)$.
\item Calculate $A_l(a)$ and $F_l(a,b)$ from Equations \ref{eq:al} and \ref{eq:fl}.\label{step:al}
\item Diagonalize $A_l(a)$, i.e. find orthogonal matrices $P_l(a)$ such that $P_l^T(a) A_l(a) P_l(a)$ is diagonal, with diagonal entries in order from largest to smallest.\label{step:diag}
\item If we want to expand the matrices, increase $n$.
\item Let $P(a)$ be the first $n$ columns of $P_l(a)$.
\item Set $A(a) = P^T(a) A_l(a) P(a)$ and $F(a,b) = P^T(a) F_l(a,b) P(b)$.
\item Return to step \ref{step:al}.
\end{enumerate}

Since this method does not explicitly calculate $\xi$ or $\eta$, we calculate the partition function per site $\kappa$ by the formula
\begin{equation}
\kappa = \frac{\left(\tr \sum_a A^4(a)\right)\left(\tr \sum_{a,b,c,d} \w\left(\begin{array}{cc} a & b \\ c & d \end{array}\right) A(a) F(a,c) A(c) F(c,d) A(d) F(d,b) A(b) F(b,a)\right)}{\left( \tr \sum_{a,b} A^2(a) F(a,b) A^2(b) F(b,a) \right)^2}.\label{eq:kappa}
\end{equation}
In practice, we actually use $A_l$ and $F_l$ in place of $A$ and $F$ in the above formula.

This formula also has a graphical interpretation, shown in Figure \ref{fig:kappa}. All terms are partition functions of the entire plane, but the term in the denominator contains one column more than the first term in the numerator, while the second term in the numerator contains both a row and a column more than the first term. The net effect is to isolate the partition function of a single cell.

\begin{figure}\center
\subfigure[First term in the numerator]{\includegraphics[scale=0.8]{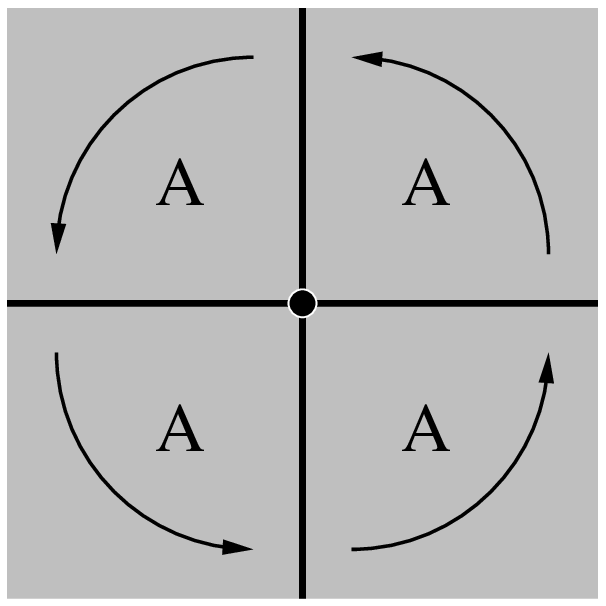}}
\subfigure[Term in the denominator]{\includegraphics[scale=0.8]{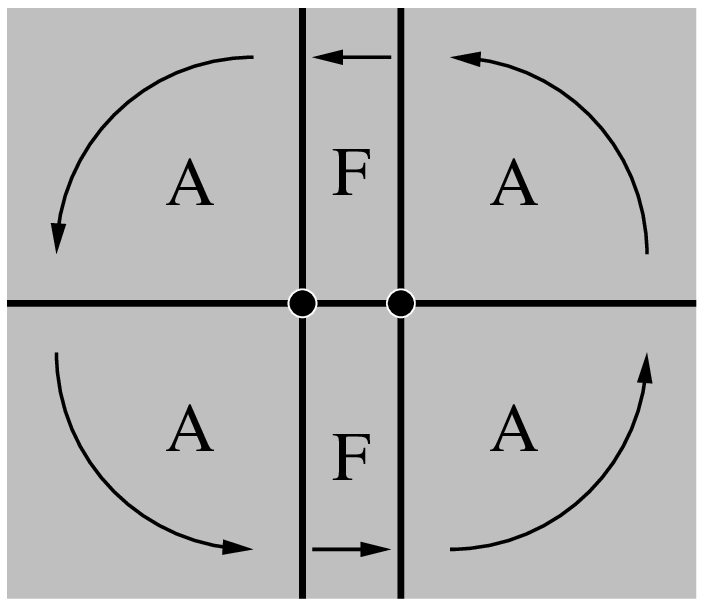}}
\subfigure[Second term in the numerator]{\includegraphics[scale=0.8]{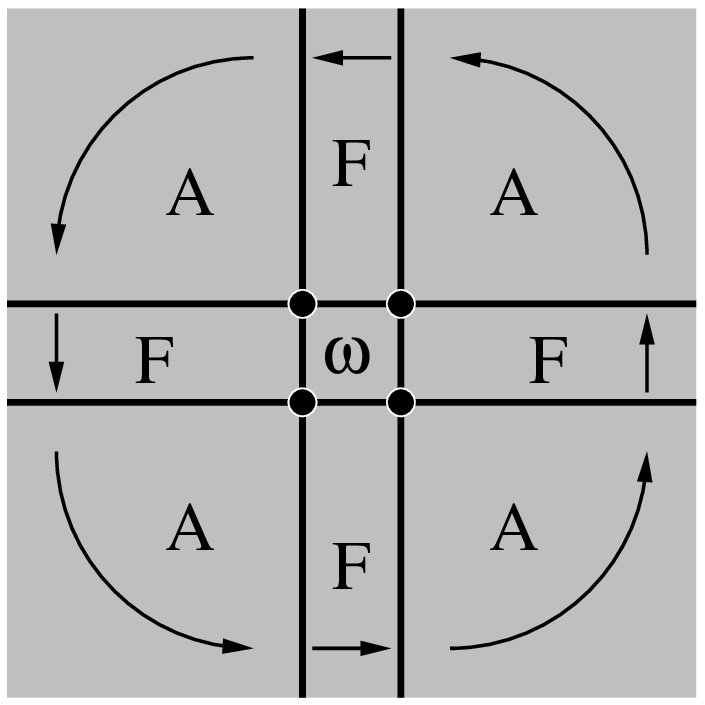}}
\caption{Terms in Equation \ref{eq:kappa}.}\label{fig:kappa}
\end{figure}

It is also easy to calculate the magnetisation via the formula
\[M = \frac{\tr \sum_a a A^4(a)}{\tr \sum_a A^4(a)}.\]

\section{Implementing CTMRG for series}\label{sec:implementation}

Although the CTMRG method has so far been only used for numerical calculations, it can work as a series-calculating tool. In this section, we discuss some difficulties that are specific to series calculations. We also mention some improvements which we have made.

\subsection{Modular arithmetic}

A standard `trick' in series calculations of long length is to perform all the calculations in integers modulo a prime. If the series that we are calculating has integer coefficients, and we know the sign of each coefficient (which is the case for the hard squares model), then we can repeat the calculations using different moduli and use the Chinese Remainder Theorem to reconstruct the original coefficients.

This difficulty of doing this in this particular algorithm arises because we need to take the square roots of some numbers (for example when normalising eigenvectors). This is partially overcome by the following lemma (taken from \cite[Section II.2]{CTM:Koblitz}).

\begin{lemma}
Let $p$ be a prime number such that $p \equiv 3\mbox{ mod }4$, and let $a$ be an integer. Then if $a$ has a square root modulo $p$, $a^{\frac{p+1}{4}}$ is a square root of $a$ modulo $p$.
\end{lemma}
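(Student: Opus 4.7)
The plan is to square the proposed root $a^{(p+1)/4}$ directly and reduce using Fermat's little theorem. First I would note that the exponent $(p+1)/4$ is an integer, since $p \equiv 3 \bmod 4$ implies $p+1 \equiv 0 \bmod 4$, so the expression $a^{(p+1)/4}$ is unambiguously defined modulo $p$.

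Next I would dispose of the trivial case $p \mid a$: here $a \equiv 0$, and $0^{(p+1)/4} \equiv 0$ is indeed a square root of $0$. So assume $\gcd(a,p) = 1$. By hypothesis there exists an integer $b$ with $b^2 \equiv a \bmod p$, and necessarily $\gcd(b,p) = 1$ as well. Then
\[
\bigl(a^{(p+1)/4}\bigr)^2 \;=\; a^{(p+1)/2} \;=\; a \cdot a^{(p-1)/2} \;\equiv\; a \cdot (b^2)^{(p-1)/2} \;=\; a \cdot b^{p-1} \;\equiv\; a \bmod p,
\]
where the last congruence is Fermat's little theorem applied to $b$. This shows $a^{(p+1)/4}$ is a square root of $a$ modulo $p$, as desired.

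There is really no hard step here; the entire argument is a one-line computation once one observes that the hypothesis $p \equiv 3 \bmod 4$ makes $(p+1)/4$ integral, so the factor $a^{(p-1)/2}$ appears cleanly and can be killed by Fermat. The only subtlety worth flagging in the write-up is that the lemma presupposes $a$ is a quadratic residue: if $a$ is a non-residue, the same computation yields $a^{(p+1)/2} \equiv -a$ via Euler's criterion $a^{(p-1)/2} \equiv -1$, which is why one must test quadratic residuacity (e.g.\ via Euler's criterion) before applying the formula in the CTMRG implementation.
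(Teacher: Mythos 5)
Your proof is correct and is the standard argument (the paper itself omits the proof, deferring to Koblitz and the author's thesis, where essentially this same computation via Fermat's little theorem appears). Your closing remark about non-residues giving $a^{(p+1)/2}\equiv -a$ is also accurate and matches the paper's observation that not all integers have square roots modulo $p$, which is what forces the block-eigenvalue workaround.
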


This lemma is easily proved --- one such proof is in \cite{CTM:Thesis}. Unfortunately, this still does not fully solve the problem, as not all integers have square roots modulo $p$ --- this can be seen by noting that each number which has a square root has 2 distinct square roots. At small matrix sizes, this did not seem to be a problem, but at larger sizes, some of the eigenvalues of $A_l(a)$ contained square roots which were not calculable. In order to address this, we use block eigenvalues, which we describe below in Section \ref{sec:block}.

\subsection{Unequal matrix sizes}

In the CTM equations and the renormalization group method, all the $A$ and $F$ matrices are always of the same size as each other. However, this is not necessary, as the equations can be made consistent with different size matrices. Table \ref{tab:size} shows the sizes needed.

\begin{table}
\begin{tabular}{lrr}
Matrix & Size (equal sizes) & Size (unequal sizes) \\
\hline
$A(0), F(0,0)$ & $n \times n$ & $n_1 \times n_1$ \\
$A(1)$ & $n \times n$ & $n_2 \times n_2$ \\
$F(0,1)$ & $n \times n$ & $n_1 \times n_2$ \\
$F(1,1)$ & $n \times n$ & $n_2 \times n_2$ \\
$A_l(0), A_l(1), F_l(0,0), F_l(0,1)$ & $2n \times 2n$ & $(n_1+n_2) \times (n_1+n_2)$ \\
$P(0)$ & $2n \times n$ & $(n_1+n_2) \times n_1$ \\
$P(1)$ & $2n \times n$ & $(n_1+n_2) \times n_2$ \\
\end{tabular}\caption{Required sizes for the matrices.}\label{tab:size}
\end{table}

The ability to set the matrices to different sizes is useful because the number of correct series terms derived at each finite size depends on the largest eigenvalue of $A_l(a)$ that is missing from $A(a)$ (this will be discussed further in Section \ref{sec:convergence}). However, the leading powers of the eigenvalues of $A_l(1)$ increase more rapidly than those of $A_l(0)$, so we can keep $A_l(1)$ at a smaller size and still derive the same number of terms.

\subsection{Diagonalization of a matrix of series}

In step \ref{step:diag} of the CTMRG method, we diagonalize a matrix which has power series elements. Furthermore, in order to obtain series exactly to some order, the diagonalization must be exact to that order. In theory, this is impossible even for real-valued matrices. However, $A_l(a)$ often turns out to have a relatively simple structure which enables us to diagonalize it exactly.

We first used the well-known power method to calculate the eigenvalues and eigenvectors of $A_l(a)$. The following theorem justifies its use in this case. In this theorem and all other cases, we order series in lexicographical order, so that a series with a lower leading power is always considered larger than one with a higher leading power.

\begin{theorem}\label{thm:power}
Let $A$ be a symmetric $n \times n$ matrix of power series, with eigenvalues $\l_1 \geq \l_2 \geq \ldots \geq \l_n$ with leading powers $l_1, \ldots, l_n$ and corresponding eigenvectors $\x_1, \ldots, \x_n$, which are taken to have unit norm. Suppose that we have an estimate $\xh_1$ of $\x_1$ which is also of unit norm and accurate to $m$ terms, i.e.
\[\xh_1 - \x_1 = O(z^m).\]

Then
\[\|A\xh_1\| - \l_1 = O(z^{l_1+m})\]
and
\[\frac{A\xh_1}{\|A\xh_1\|} - \x_1 = O(z^{m+l_2-l_1}).\]
\end{theorem}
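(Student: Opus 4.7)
The plan is to exploit the fact that $A$, being symmetric, admits an orthonormal eigenbasis $\{\x_1,\ldots,\x_n\}$ at the level of formal series, and to expand the estimate in that basis as $\xh_1 = \sum_i c_i \x_i$. Everything then reduces to tracking leading powers of the scalar coefficients $c_i$ under multiplication by the diagonal operator with entries $\l_i$. First I would establish the orders of the $c_i$: taking inner products of $\xh_1 - \x_1 = O(z^m)$ against each $\x_j$ gives $c_1 = 1 + O(z^m)$ and $c_i = O(z^m)$ for $i \geq 2$. The unit-norm constraint $\sum_i c_i^2 = 1$ (with $c_1 > 0$) then upgrades the first estimate to $c_1 = 1 + O(z^{2m})$, since $1 - c_1^2 = \sum_{i \geq 2} c_i^2$ is a sum of squares of $O(z^m)$ series.

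For the first claim, orthonormality gives
\[\|A\xh_1\|^2 = \sum_i c_i^2 \l_i^2 = \l_1^2 + O(z^{2m+2l_1}) + \sum_{i\geq 2} O(z^{2m+2l_i}),\]
and since $l_i \geq l_1$ (forced by the convention that lower leading power means larger series, so $\l_1 \geq \l_2 \geq \ldots$ implies $l_1 \leq l_2 \leq \ldots$), the tail is absorbed into $O(z^{2m+2l_1})$. Factoring $\|A\xh_1\|^2 - \l_1^2$ as a difference of squares and dividing by the sum $\|A\xh_1\| + \l_1$, whose leading order is $l_1$, yields $\|A\xh_1\| - \l_1 = O(z^{2m+l_1})$, a fortiori in $O(z^{m+l_1})$.

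For the second claim, I would compute $A\xh_1 = c_1 \l_1 \x_1 + \sum_{i \geq 2} c_i \l_i \x_i$ and divide by $\|A\xh_1\| = \l_1 + O(z^{2m+l_1})$. The perturbation of $c_1$ and of the reciprocal of the norm affect only the $\x_1$ component, and only at order $z^{2m}$, which is subdominant. The surviving leading error therefore comes from the $i=2$ term, namely $(c_2 \l_2 / \l_1)\x_2 = O(z^{m+l_2-l_1})$, as required. The main obstacle is purely accounting: one must verify that division by $\l_1$ acts as a shift of $-l_1$ on leading powers (formal Laurent arithmetic, valid because the leading coefficient of $\l_1$ is a unit in the base ring, which is needed in the modular setting anyway), and that the unit-norm constraint really delivers the $z^{2m}$ improvement for $c_1$ rather than merely $z^m$; the rest is direct manipulation with the lexicographic leading-power order.
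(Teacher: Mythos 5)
Your argument is the same eigenbasis-expansion bookkeeping as the paper's proof: write $\xh_1 = \sum_i c_i \x_i$ with $c_1 = 1 + O(z^m)$ and $c_i = O(z^m)$ for $i \geq 2$, apply $A$, and track leading powers under the lexicographic order. Your treatment of the eigenvalue claim is correct, and in fact slightly sharper than the paper's: the unit-norm refinement $c_1 = 1 + O(z^{2m})$ plus the difference-of-squares device yields $O(z^{2m+l_1})$, and it also handles the square root hidden in $\|A\xh_1\|$ more explicitly than the paper does.

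The second claim, however, has one step that does not close as written. You bound the error in the $\x_1$-component of $A\xh_1/\|A\xh_1\|$ by $O(z^{2m})$ and call it subdominant to the target $O(z^{m+l_2-l_1})$; that comparison requires $m \geq l_2 - l_1$, which the theorem does not assume and which can fail in the intended application (a freshly introduced eigenvector estimate may be accurate to only one or two terms while the gap $l_2 - l_1$ is larger). The defect comes from bounding $c_1$ and $\|A\xh_1\|$ separately so that their common factor never cancels. The repair is to keep $c_1$ symbolic in both places: from
\[\|A\xh_1\|^2 = c_1^2\l_1^2\Bigl(1 + \sum_{i\geq 2} \frac{c_i^2\l_i^2}{c_1^2\l_1^2}\Bigr)\]
one gets $\|A\xh_1\| = c_1\l_1\bigl(1 + O(z^{2(m+l_2-l_1)})\bigr)$, so the $\x_1$-coefficient of the normalized vector is $1 + O(z^{2(m+l_2-l_1)})$, which really is subdominant to the $\x_2$-term error $O(z^{m+l_2-l_1})$. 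This is exactly the cancellation the paper's proof achieves by writing $\|A\xh_1\| = (1+a_1)\l_1 + O(z^{l_2+m})$ and dividing the whole vector by that quantity. With this substitution your proof is complete.
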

\begin{proof}
Write
\[\xh_1 - \x_1 = a_1 \x_1 + a_2 \x_2 + \ldots + a_n \x_n\]
where $a_i = O(z^m)$ for all $i$. Then
\begin{eqnarray*}
A \xh_1 & = & (1 + a_1) A \x_1 + a_2 A \x_2 + \ldots + a_n A \x_n \\
	& = & (1 + a_1) \l_1 \x_1 + a_2 \l_2 \x_2 + \ldots + a_n \l_n \x_n \\
	& = & (1 + a_1) \l_1 \x_1 + O(z^{l_2+m}) \\
\|A \xh_1\| & = & (1 + a_1) \l_1 + O(z^{l_2+m}) \\
	& = & \l_1 + O(z^{l_1+m}).
\end{eqnarray*}

Normalising (with some abuse of $O$-notation) gives
\begin{eqnarray*}
\frac{A\xh_1}{\|A\xh_1\|} & = & (1 + O(z^{l_2+m-l_1}))^{-1} \x_1 + \frac{O(z^{l_2+m})}{O(z^{l_1})} \\
	& = & \x_1 + O(z^{m+l_2-l_1}).
\end{eqnarray*}

\end{proof}

Theorem \ref{thm:power} shows that if the maximum eigenvalue of $A_l(a)$ is not degenerate to leading power, then every iteration of the power method produces more correct terms than the previous iteration, in both the eigenvalue and the eigenvector. If this occurs, we find the dominant eigenvalue using the power method, then deflate the matrix by normalising the eigenvector and calculating
\[A_l(a) - \lambda \x \x^T.\]
This matrix has the same eigenvalues and eigenvectors of $A_l(a)$, but with $\lambda$ replaced by 0.

While the converse of Theorem \ref{thm:power} --- if the maximum eigenvalue of the matrix is degenerate to leading power, then the power method fails --- is not always true, it sometimes holds. In these cases, we cannot use the power method. To overcome this, we shift the eigenvalues and invert. $(A - \l_0 I)^{-1}$ has the same eigenvectors as $A$, but any eigenvalue $\l$ becomes $\frac{1}{\l - \l_0}$. We use this if we know the leading terms of one of the eigenvalues of $A_l(a)$ to an order which specifies it uniquely. If we know that $\l_0$ is equal to exactly one of the eigenvalues of $A_l(a)$ up to order $z^m$, then $A_l(a) - \l_0 I$ will have one eigenvalue with leading power $z^m$, with all other eigenvalues having smaller leading power. Hence $(A_l(a) - \l_0 I)^{-1}$ will have a largest eigenvalue which is not degenerate to leading power, and we can use the power method on this matrix.

In fact, because the convergence of the power method depends on the difference in leading powers between the two largest eigenvalues, if we have a very good approximation of the required eigenvalue, shifting and inverting will result in a matrix which enables us to converge to the correct eigenvalue very quickly.

This leaves us with the problem of finding the first few terms of all the eigenvalues of $A_l(a)$ with enough precision to uniquely identify them. In practice, at small size almost all of the eigenvalues of $A_l(a)$ have distinct leading terms, if not necessarily leading powers (the first case of identical leading terms occurs at size 23 in $A_l(0)$), so it is usually sufficient to find the first term of each eigenvalue. We used various methods:

\begin{itemize}
\item The eigenvalues do not change much from iteration to iteration (of the CTMRG method). We use eigenvalues from the previous iteration as starting points, and can converge to the new eigenvalues in very few power method iterations. However, when we expand the matrices, we have one eigenvalue too few, so this is not always sufficient.
\item Sometimes, applying a few iterations of the power method does produce the leading term of the largest eigenvalue, even if that eigenvalue is degenerate to leading power. It is usually obvious when this happens, because the leading term becomes invariant within 2-3 iterations.
\item If the lowest leading power in $A_l(a)$ is $z^m$, then $[z^m]A_l(a)$ has eigenvalues which are the coefficients of $z^m$ in the eigenvalues of $A_l(a)$. Often, there will only be a few components of $A_l(a)$ with leading power $z^m$, and these will often break down into a simple block diagonal structure. If one of the blocks is of size $1 \times 1$, then that contains the leading term for an eigenvalue.
\item If one of the blocks is of size $2 \times 2$, then we can calculate leading terms for two eigenvalues by manually solving the eigenvalue equation for that $2 \times 2$ block.
\item If all else fails, we use block eigenvalues. This is described in the following section.
\end{itemize}

\subsection{Block eigenvalues}\label{sec:block}

In the diagonalization step of the CTMRG method, it is not really important to exactly diagonalize $A_l(a)$. All we need to do is to apply a similarity transformation to $A_l(a)$ which reduces it to the required size while keeping the largest eigenvalues. Certainly, diagonalizing is one way to ensure that this happens, but it is not the only way. The idea behind \emph{block eigenvalues} is that they keep the required eigenvalues, while (potentially) avoiding calculational pitfalls which may occur if we diagonalize fully. Formally:

\begin{definition}
Let $A$ be a symmetric $n \times n$ matrix. A $2 \times 2$ matrix $L$ is a \emph{block 2-eigenvalue} of $A$ with corresponding \emph{block 2-eigenvector} $Y$, where $Y$ is an $n \times 2$ matrix, if
\[AY = YL.\]
\end{definition}

Block $k$-eigenvalues (where $k > 2$) are defined in an identical manner, and all results from this section can be extended to larger $k$. The next theorem shows that block 2-eigenvalues have the property of only `representing' 2 eigenvalues.

\begin{theorem}\label{thm:block}
Let $A$ be a symmetric $n \times n$ matrix with no degenerate eigenvalues, and let $L$ be a block 2-eigenvalue of $A$ with corresponding block 2-eigenvector $Y$. Then the columns of $Y$ are linear combinations of at most two eigenvectors $\x_1$ and $\x_2$ of $A$. Furthermore, if the columns of $Y$ are linear combinations of two eigenvectors, then $L$ has eigenvalues $\l_1$ and $\l_2$, which are the eigenvalues of $A$ corresponding to $\x_1$ and $\x_2$.
\end{theorem}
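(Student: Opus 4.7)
The plan is to diagonalize $A$ in an orthonormal eigenbasis and then read off the condition $AY = YL$ row by row in that basis. Concretely, since $A$ is symmetric with distinct eigenvalues, let $\x_1, \ldots, \x_n$ be an orthonormal eigenbasis with $A\x_i = \l_i \x_i$, and write the columns of $Y$ in this basis as $Y = \sum_i \x_i \mathbf{r}_i^T$, where $\mathbf{r}_i \in \mathbb{R}^2$ is the $i$-th ``row'' of the coefficient matrix. Equivalently, if $\Lambda = \mathrm{diag}(\l_1, \ldots, \l_n)$ and $U = [\x_1 \cdots \x_n]$, then $Y = UC$ for some $n \times 2$ coefficient matrix $C$ whose rows are the $\mathbf{r}_i^T$.

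Substituting into $AY = YL$ gives $U \Lambda C = U C L$, and since $U$ is invertible, $\Lambda C = C L$. Reading off the $i$-th row of this matrix equation yields
\[\l_i \mathbf{r}_i^T = \mathbf{r}_i^T L\]
for every $i$. In other words, whenever $\mathbf{r}_i \neq 0$, the vector $\mathbf{r}_i^T$ is a left eigenvector of $L$ with eigenvalue $\l_i$.

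Since $L$ is a $2 \times 2$ matrix, its spectrum consists of at most two distinct values. Because the $\l_i$ are pairwise distinct by hypothesis, at most two of them can coincide with eigenvalues of $L$, so $\mathbf{r}_i = 0$ for all but at most two indices $i$. This immediately gives the first claim: the columns of $Y$ involve at most two of the $\x_i$. For the second claim, suppose exactly two indices, say $1$ and $2$, have $\mathbf{r}_i \neq 0$. Then $\mathbf{r}_1^T$ and $\mathbf{r}_2^T$ are left eigenvectors of $L$ with eigenvalues $\l_1$ and $\l_2$, so both $\l_1$ and $\l_2$ belong to the spectrum of $L$. Since $L$ has only two eigenvalues (counted with multiplicity), these are exactly the eigenvalues of $L$.

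The only place requiring a little care is the bookkeeping of whether one is dealing with left or right eigenvectors of $L$; once $AY = YL$ is written in the eigenbasis of $A$, the condition becomes $\Lambda C = CL$, and the ``row'' structure forces us to look at left eigenvectors of $L$, not right ones. No analytic difficulty arises: the entire argument is linear algebra using the spectral decomposition of $A$ and the fact that a $2 \times 2$ matrix has at most two eigenvalues.
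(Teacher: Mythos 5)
Your proof is correct and rests on the same key observation as the paper's: expanding $Y$ in the eigenbasis of $A$, the relation $AY=YL$ forces each nonzero coefficient row to be a left eigenvector of $L$ with eigenvalue $\l_i$, and a $2\times 2$ matrix cannot accommodate three distinct eigenvalues. The only difference is presentational --- you run the argument directly over all $n$ rows via $\Lambda C = CL$, whereas the paper argues by contradiction starting from an assumed expansion in three eigenvectors --- so this is essentially the paper's proof in a slightly cleaner, more uniform form.
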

\begin{proof}
Suppose that the columns of $Y$ are linear combinations of 3 eigenvectors of $A$:
\[Y = \left[\begin{array}{c|c} a_1 \x_1 + a_2 \x_2 + a_3 \x_3 & b_1 \x_1 + b_2 \x_2 + b_3 \x_3 \end{array}\right].\]
Let
\[L = \left[\begin{array}{cc} l_{11} & l_{12} \\ l_{21} & l_{22} \end{array}\right].\]

Then
\begin{eqnarray*}
AY & = & \left[\begin{array}{c|c} a_1 A \x_1 + a_2 A \x_2 + a_3 A \x_3 & b_1 A \x_1 + b_2 A \x_2 + b_3 A \x_3 \end{array}\right] \\
	& = & \left[\begin{array}{c|c} a_1 \l_1 \x_1 + a_2 \l_2 \x_2 + a_3 \l_3 \x_3 & b_1 \l_1 \x_1 + b_2 \l_2 \x_2 + b_3 \l_3 \x_3 \end{array}\right] \\
	& = & YL \\
	& = & \big[ \;\; l_{11} \left( a_1 \x_1 + a_2 \x_2 + a_3 \x_3 \right) + l_{21} \left( b_1 \x_1 + b_2 \x_2 + b_3 \x_3 \right) \\
	& & \hspace{1cm} \big| \;\; l_{12} \left( a_1 \x_1 + a_2 \x_2 + a_3 \x_3 \right) + l_{22} \left( b_1 \x_1 + b_2 \x_2 + b_3 \x_3 \right) \;\; \big] \\
	& = & \big[ \;\; (a_1 l_{11} + b_1 l_{21}) \x_1 + (a_2 l_{11} + b_2 l_{21}) \x_2 + (a_3 l_{11} + b_3 l_{21}) \x_3 \\
	& & \hspace{1cm} \big| \;\; (a_1 l_{12} + b_1 l_{22}) \x_1 + (a_2 l_{12} + b_2 l_{22}) \x_2 + (a_3 l_{12} + b_3 l_{22}) \x_3 \;\; \big].
\end{eqnarray*}

This implies that
\begin{eqnarray*}
a_1 l_{11} + b_1 l_{21} & = & a_1 \l_1 \\
a_1 l_{12} + b_1 l_{22} & = & b_1 \l_1
\end{eqnarray*}
and hence that $\left[\begin{array}{cc} a_1 & b_1 \end{array}\right]$ is a left eigenvector of $L$ with corresponding eigenvalue $\l_1$. Similarly, $\left[\begin{array}{cc} a_2 & b_2 \end{array}\right]$ and $\left[\begin{array}{cc} a_3 & b_3 \end{array}\right]$ are also left eigenvectors of $L$ with corresponding eigenvalues $\l_2$ and $\l_3$ respectively. But $L$ is a $2 \times 2$ matrix, and so can have at most 2 distinct eigenvalues, and from our assumptions, $\l_1, \l_2,$ and $\l_3$ are distinct. This is a contradiction, so the columns of $Y$ can be spanned by at most 2 eigenvectors of $A$. The second part of the theorem now follows from the observed eigenvalues of $L$.

\end{proof}

It is easy to see that the converse of this theorem is also true: any two linear combinations of two eigenvectors form a block 2-eigenvector.

We observe that block eigenvalues are not unique, even if we fix the eigenvalues of $A$ which they contain. For example, if $A$ is itself $2 \times 2$ with eigenvalues $\l_1$ and $\l_2$, then both $\mbox{diag}(\l_1, \l_2)$ and $A$ itself are block 2-eigenvalues of $A$. It is this flexibility that allows us to select block eigenvalues which are easy to compute.

We modify the power method to find block eigenvalues. This method is as follows:

\begin{enumerate}
\item Choose two indices $i$ and $j$.\label{step:choose}
\item Start with an estimate of the block 2-eigenvector $Y_0$, with $(Y_0)_{\{i,j\}} = I_2$, where $(Y_0)_{\{i,j\}}$ is the submatrix of $Y_0$ formed by taking rows $i$ and $j$. Set $k = 0$.
\item Calculate $AY_k$.\label{step:parpm}
\item Set $L_k = (Y_0)_{\{i,j\}}$.
\item Set $Y_{k+1} = A Y_k L_k^{-1}$.
\item Set $k = k+1$.
\item If $k$ does not exceed some fixed value, return to step \ref{step:parpm}.
\item Apply Gram-Schmidt orthogonalization to the columns of $Y_k$.\label{step:gs}
\item Set $L_k = Y^T_k A Y_k$.\label{step:rayl}
\item $L_k$ and $Y_k$ are the estimates for the block 2-eigenvalue and block 2-eigenvector respectively.
\end{enumerate}

In step \ref{step:choose}, $i$ and $j$ are chosen to coincide with a $2 \times 2$ block of the leading power of $A_l(a)$. Steps \ref{step:gs} and \ref{step:rayl} ensure that the columns of the approximate block eigenvector are orthonormal.

The following theorem justifies this method. Its proof is an extended version of Theorem \ref{thm:power} and will not be shown.
\begin{theorem}
Let $A$ be a matrix of power series satisfying the conditions of Theorem \ref{thm:power}. Let $Y = \left[\begin{array}{c|c} \y_1 & \y_2 \end{array}\right]$, where $\y_1$ and $\y_2$ are linear combinations of $\x_1$ and $\x_2$ such that $\y_1, \y_2 = O(1)$ and for some indices $i$ and $j$, $Y_{\{i,j\}} = I_2$. Then $Y$ is a block 2-eigenvector of $A$ with block 2-eigenvalue $L$, say. Suppose we have an estimate $\Yh$ of $Y$ which also has $\Yh_{\{i,j\}} = I_2$ and is accurate to $m$ terms, i.e.
\[\Yh - Y = O(z^m).\]

Then
\[\Lh = (A \Yh)_{\{i,j\}} = L + O(z^{l_1+m})\]
and
\[A \Yh \Lh^{-1} = Y + O(z^{m+l_3-l_2}).\]
\end{theorem}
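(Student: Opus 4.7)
The plan is to follow the skeleton of the proof of Theorem~\ref{thm:power}, expanding $\Yh$ in the (orthonormal) eigenvector basis of $A$ and propagating leading-order errors, with one added twist: a block 2-eigenvector has an extra rotational freedom inside the invariant 2-plane spanned by $\x_1$ and $\x_2$, and this freedom must be tracked carefully to see why the advertised rate survives.

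First write $\Yh = Y + \sum_{k=1}^n \x_k \mathbf{e}_k^T$, where $\mathbf{e}_k^T = \x_k^T(\Yh - Y)$ is a row of length $2$ of order $O(z^m)$. Using $A\x_k = \l_k \x_k$ and $AY = YL$,
\[A\Yh = YL + \sum_k \l_k \x_k \mathbf{e}_k^T,\]
and extracting rows $\{i,j\}$, together with $Y_{\{i,j\}}=I_2$, gives
\[\Lh - L = \sum_k \l_k (\x_k)_{\{i,j\}} \mathbf{e}_k^T = O(z^{l_1+m}),\]
the $k=1$ term being lex-largest. This settles the first identity.

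For the second identity, a naive estimate $\Lh^{-1} \approx L^{-1} - L^{-1} \Delta L^{-1}$ would let the in-plane errors $\mathbf{e}_1,\mathbf{e}_2$ propagate through $A\Yh\Lh^{-1}$ and produce an $O(z^m)$ contamination (from the $k=2$ term), which would spoil the advertised rate. The key observation is that these in-plane errors merely re-parameterise $Y$ as a different block 2-eigenvector inside the \emph{same} invariant subspace, and this re-parameterisation is cancelled exactly on right-multiplication by $\Lh^{-1}$. To expose the cancellation I would set $X_{12}=[\x_1\;\x_2]$ and $C=(X_{12})_{\{i,j\}}^{-1}$, so that $Y=X_{12}C$ and $L=C^{-1}\mathrm{diag}(\l_1,\l_2)C$; splitting the in-plane part $\tilde E = \left[\begin{array}{c}\mathbf{e}_1^T\\\mathbf{e}_2^T\end{array}\right]$ from $E_3=\sum_{k\ge3}\x_k\mathbf{e}_k^T$ and using $X_{12}\mathrm{diag}(\l_1,\l_2) = YLC^{-1}$, the common factor $LC^{-1}\tilde E$ appears as a right factor in both $A\Yh$ and $\Lh$. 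Setting $M = L(I + C^{-1}\tilde E)$ therefore produces the parallel decomposition
\[A\Yh = YM + AE_3, \qquad \Lh = M + (AE_3)_{\{i,j\}}.\]

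Finally, Neumann-expanding $\Lh^{-1} = M^{-1} - M^{-1}(AE_3)_{\{i,j\}}M^{-1} + \cdots$ and multiplying out yields
\[A\Yh\Lh^{-1} = Y + \bigl[AE_3 - Y\,(AE_3)_{\{i,j\}}\bigr] M^{-1} + \mbox{(higher order)},\]
with $\tilde E$ absent from the error. Since $AE_3$ and its $\{i,j\}$-submatrix are $O(z^{l_3+m})$ while the entries of $M^{-1}$ have leading power no smaller than $-l_2$, the surviving error is $O(z^{m+l_3-l_2})$, as required. I expect the main obstacle to be precisely this final book-keeping step: confirming that the Neumann expansion of $\Lh^{-1}$ combines with $A\Yh$ so that every term involving $\tilde E$ cancels at the leading order, and what would otherwise be an $O(z^m)$ contamination from the in-plane component is genuinely suppressed to $O(z^{m+l_3-l_2})$.
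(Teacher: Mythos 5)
The paper offers no proof of this theorem, stating only that it is ``an extended version of Theorem \ref{thm:power}'' whose proof ``will not be shown'', so your argument cannot be matched against the author's line by line; judged on its own it is correct, and it is precisely the extension the paper alludes to. The expansion of $\Yh-Y$ in the orthonormal eigenbasis of the symmetric matrix $A$ and the resulting identity $\Lh-L=\sum_k\l_k(\x_k)_{\{i,j\}}\mathbf{e}_k^T=O(z^{l_1+m})$ are the direct analogues of the paper's proof of Theorem \ref{thm:power}. The genuinely new ingredient --- which a naive transcription of that proof would miss --- is the one you isolate: since $l_1\le l_2$, the in-plane error $X_{12}\tilde E$ fed through $L^{-1}$ would by itself contribute $O(z^{l_1+m-l_2})$, which can be \emph{larger} than the initial $O(z^m)$ error, so the stated rate genuinely requires the cancellation you exhibit rather than term-by-term bounding. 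Your factorisation $A\Yh=YM+AE_3$ and $\Lh=M+(AE_3)_{\{i,j\}}$ with $M=L(I+C^{-1}\tilde E)$ makes that cancellation exact to all orders of the Neumann expansion of $\Lh^{-1}$, leaving an error $\bigl(AE_3-Y(AE_3)_{\{i,j\}}\bigr)M^{-1}+\cdots=O(z^{l_3+m})\cdot O(z^{-l_2})=O(z^{m+l_3-l_2})$ as required, with the new iterate automatically renormalised to $(A\Yh\Lh^{-1})_{\{i,j\}}=I_2$. The only detail I would add in a full write-up is the observation that $(X_{12})_{\{i,j\}}$ is invertible --- immediate from $I_2=Y_{\{i,j\}}=(X_{12})_{\{i,j\}}C'$ with $C'$ the (necessarily invertible) matrix expressing $\y_1,\y_2$ in terms of $\x_1,\x_2$ --- since this licenses the definitions of $C$, $M$ and $M^{-1}$.
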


This theorem shows that if the 2nd and 3rd largest eigenvalues of $A_l(a)$ are non-degenerate to leading power, using the modified power method with block 2-eigenvalues gives us a block 2-eigenvalue and associated 2-eigenvector. Therefore, if the first non-degeneracy occurs between the $k$th and ($k+1$)th largest eigenvalues, we will use block $k$-eigenvalues.

Once we have found the block eigenvalues, we must also deflate the matrix. The following theorem gives us the relevant formula.

\begin{theorem}
Let $A$ be a symmetric matrix with block 2-eigenvalue $L$ and corresponding 2-eigenvector $Y$. Suppose that $Y^TY = I$, and that $L$ has eigenvalues $\l_1$ and $\l_2$ (which are also eigenvalues of $A$). Then
\[A - YLY^T\]
has the same eigenvectors as $A$, and with the same corresponding eigenvalues, except that $\l_1$ and $\l_2$ are replaced by 0.
\end{theorem}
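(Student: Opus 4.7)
The plan is to leverage Theorem \ref{thm:block} to reduce the identity to a direct computation in the two-dimensional subspace spanned by the relevant eigenvectors of $A$. By that theorem (together with the hypothesis that $L$ genuinely carries two eigenvalues $\l_1, \l_2$ of $A$), the columns of $Y$ are linear combinations of the corresponding eigenvectors $\x_1, \x_2$. So I would first write $Y = [\x_1 \; \x_2]\, Q$ for some $2 \times 2$ matrix $Q$, where the $\x_i$ are chosen to be unit norm; note $Q$ must be invertible because $Y^T Y = I$ forces $Y$ to have rank 2.

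Next I would pin down $Q$ and $L$ concretely. Since $A$ is symmetric, $\x_1 \perp \x_2$, so $[\x_1 \; \x_2]^T [\x_1 \; \x_2] = I_2$, and the condition $Y^T Y = I$ reduces to $Q^T Q = I$, i.e.\ $Q$ is orthogonal. From $AY = YL$ and $A[\x_1 \; \x_2] = [\x_1 \; \x_2]\,\mathrm{diag}(\l_1,\l_2)$, I would deduce $L = Q^{-1}\,\mathrm{diag}(\l_1,\l_2)\,Q$, so
\[
Y L Y^T \;=\; [\x_1 \; \x_2]\, Q\, Q^{-1}\,\mathrm{diag}(\l_1,\l_2)\, Q\, Q^T\, [\x_1 \; \x_2]^T \;=\; \l_1 \x_1 \x_1^T + \l_2 \x_2 \x_2^T,
\]
using $QQ^T = I$. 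This is just the standard rank-2 spectral projection onto $\mathrm{span}(\x_1,\x_2)$.

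It then remains to verify the action of $A - YLY^T$ on each eigenvector of $A$. For any eigenvector $\x_k$ of $A$ with eigenvalue $\l_k$, orthonormality of the eigenbasis gives $\x_i^T \x_k = \delta_{ik}$ for $i = 1, 2$, hence
\[
(A - YLY^T)\x_k \;=\; \l_k \x_k - \l_1 \x_1(\x_1^T \x_k) - \l_2 \x_2(\x_2^T \x_k),
\]
which equals $0$ when $k \in \{1,2\}$ and equals $\l_k \x_k$ otherwise. This gives exactly the claimed spectrum.

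The argument is essentially bookkeeping once Theorem \ref{thm:block} is invoked; the only step where I expect any subtlety is the identification of $Q$ as orthogonal, which is what allows the $Q$ and $Q^{-1}$ to cancel cleanly in $YLY^T$ so that the resulting operator is the symmetric rank-2 projector associated with $\l_1, \l_2$. Without the assumption $Y^T Y = I$, one would only get an oblique deflation and the off-diagonal eigenvectors would no longer be preserved.
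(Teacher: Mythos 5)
Your proof is correct and follows essentially the same route as the paper: both invoke Theorem \ref{thm:block} to place the columns of $Y$ in the span of $\x_1,\x_2$ and then verify the action of $A - YLY^T$ on the two complementary invariant subspaces. The only difference is cosmetic --- you explicitly factor $Y = [\x_1\;\x_2]Q$ with $Q$ orthogonal to exhibit $YLY^T$ as the spectral projector $\l_1\x_1\x_1^T + \l_2\x_2\x_2^T$, whereas the paper skips this and uses $Y^T\x = 0$ for the untouched eigenvectors and $(A - YLY^T)Y = AY - YL = 0$ directly for the annihilated ones.
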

\begin{proof}
From Theorem \ref{thm:block}, we know that the columns of $Y$ are spanned by two eigenvectors of $A$, say $\x_1$ and $\x_2$. Let $\x$ be a different eigenvector of $A$ with corresponding eigenvalue $\l$. Then
\[(A - YLY^T)\x = A \x - YLY^T\x = \l \x\]
so $\l$ is an eigenvalue of the deflated matrix. On the other hand,
\[(A - YLY^T)Y = A Y - Y L I = 0\]
so any eigenvalue associated with $Y$ is set to 0.

\end{proof}

\subsection{Model-specific adjustments}

Most of the adjustments we made to the method are applicable to any model. However, we did make some adjustments which are specific to the hard squares model. The most important arises from the fact that by definition, $F(1,1)$ must be the zero matrix. Furthermore, it is easy to see from Equation \ref{eq:al} that only the top left $n_1 \times n_1$ block of $A_l(1)$ is nonzero. This enables us to treat $A_l(1)$ as a $n_1 \times n_1$ matrix, which makes manipulation faster. We note that in the case where the sizes are equal ($n_1 = n_2$), this means that we are calculating and keeping all of the eigenvalues of $A_l(1)$.

This latter point did in fact trip us up somewhat: when we tried to calculate and keep an extra eigenvalue (which should be 0), we produced a series with very high leading power, and gibberish for the eigenvector. Naturally this led to chaos when we tried to reduce the other matrices and repeat!

\section{Convergence}\label{sec:convergence}

If we use Equation \ref{eq:kappa} as written to calculate $\kappa$, the number of series terms we obtain is given by the largest eigenvalue of the $A_l$ matrices that we leave out in the shrinking step. More precisely, if the largest missing eigenvalue has leading power $z^a$, then the first term that is wrong in the approximation of $\kappa$ is $z^{4a}$. This is because all terms in Equation \ref{eq:kappa} involve 4th powers of the $A$ matrices. Table \ref{tab:eigen} shows the number of terms that we would produce if one matrix was limited in size and the other was unlimited.

\begin{table}
\begin{tabular}{lr|lr}
Size of $A(0)$ & Number of terms & Size of $A(1)$ & Number of terms \\
\hline
1 & 8 & 1 & 17 \\
2 & 16 & 2 & 25 \\
3 & 24 & 3 & 33 \\
4-5 & 32 & 4 & 41 \\
6-7 & 40 & 5 & 45 \\
8-10 & 48 & 6 & 49 \\
11-13 & 56 & 7-8 & 57 \\
14-17 & 64 & 9-11 & 65 \\
18 & 68 & 12-14 & 73 \\
19-22 & 72 & 15-18 & 81 \\
23-28 & 80 & 19 & 85 \\
29 & 88 & 20 & 89 \\
\end{tabular}\caption{Number of correct series terms from each matrix.}\label{tab:eigen}
\end{table}

However, as we remarked after Equation \ref{eq:kappa}, in practice we use $A_l$ and $F_l$ in place of $A$ and $F$ in this equation to calculate $\kappa$. It is now much less obvious how many terms we now derive. We calculated the number of correct terms at each size by comparing series resulting from small sizes with known terms from larger sizes. The results are in Table \ref{tab:terms}.

\begin{table}
\begin{tabular}{llrr}
& & Number of terms & Number of terms \\
Size of $A(0)$ & Size of $A(1)$ & from small matrices & from large matrices \\
\hline
2 & 2 & 16 & 20 \\
3 & 3 & 24 & 28 \\
4 & 4 & 32 & 36 \\
5 & 5 & 32 & 38 \\
6 & 6 & 40 & 44 \\
7 & 7 & 40 & 46 \\
8 & 8 & 48 & 52 \\
9 & 9 & 48 & 52 \\
10 & 10 & 48 & 54 \\
11 & 11 & 56 & 60 \\
12 & 12 & 56 & 60 \\
13 & 13 & 56 & 62 \\
14 & 14 & 64 & 68 \\
15 & 15 & 64 & 68 \\
16-17 & 15 & 64 & 70 \\
18 & 15 & 68 & 76 \\
19-21 & 15 & 72 & 76 \\
22 & 15 & 72 & 78 \\
\end{tabular}\caption{Number of correct series terms using small and large matrices.}\label{tab:terms}
\end{table}

It is apparent that if we set a desired number of terms, then set the matrices to the smallest size able to derive this number of terms, then using large matrices gives us 4 extra terms. Furthermore, if we use the largest size possible, using large matrices gives us 6 extra terms if this size is different. The only exception to this rule occurs at size $(n_1,n_2) = (18,15)$, which can be considered a special case in the sense that every other size produces a number of terms which is a multiple of 8.

Using this information, we ran the CTMRG method for sizes up to $(n_1,n_2) = (29,20)$, and were able to determine that this yields 92 terms for the partition function per site.

One question of interest is whether the method is indeed an $O(\alpha^{\sqrt{n}})$ method, where $n$ is now the number of terms derived. For this to be true, we would need the matrix size (which we denote by $m$ for this argument) to also grow like $O(\beta^{\sqrt{n}})$. Empirical evidence does indeed suggest that this relationship holds. Figure \ref{fig:conv} shows a plot of the logarithm of the matrix size against the square root of the number of terms, and it can be seen that a linear relationship is quite strongly apparent.

\begin{figure}\center
\includegraphics[scale=0.7]{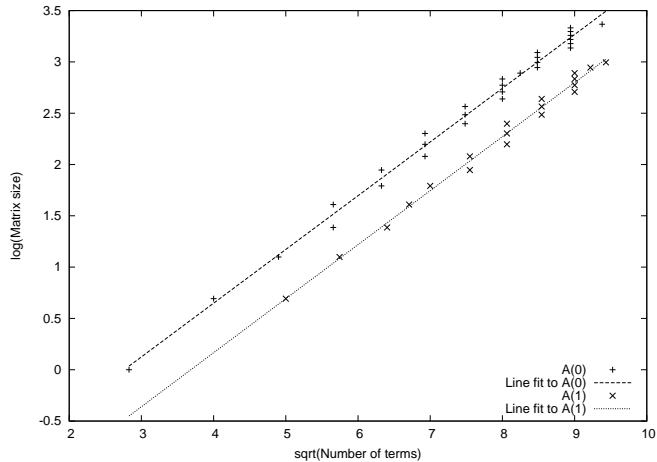}\caption{Matrix size vs. number of terms.}\label{fig:conv}
\end{figure}

The line fits have very close to the same slope, and it seems reasonable to conjecture that the true slopes are indeed identical. Taking the maximum of the two fitted slopes gives us $\beta \approx 1.69$. Now the CTMRG method is theoretically an $O(m^3)$ method if the shifted inverse power method is not used, and $O(m^4)$ otherwise, although in practice the latter case is more efficient. This gives us an estimate of $\alpha \approx 4.85$ for the former case and $\alpha \approx 8.21$ for the latter.

\section{Analysis}\label{sec:analysis}

It must be noted that the focus of this paper is on the CTMRG method used to generate the hard squares series, rather than results obtained from analysing the resulting series. Nevertheless, the series is much longer than anything previously generated, so we analyse it to see what results can be obtained.

The series which we have generated is the low-density series, where all spins are 0 in the base state. This series was generated to 43 terms by Baxter \emph{et al.} in \cite{CTM:Hsq}, but was not used for analysis, as they preferred the high-density series (the expansion in the variable $z^{-1}$), which was also generated via CTM. Unfortunately, we have so far been unable to generate this series with CTMRG.

Two points of interest in the hard squares model are the critical point at $z_c \approx 3.80$, and the dominant singularity at $z \approx -0.12$. The critical point is of interest because it is the transition point where one sublattice becomes preferentially occupied, i.e. the model changes from low- to high-density. However, because the dominant singularity for the low-density series is much closer to 0, this tends to `drown out' information about the critical point, so it is easier to analyse the high-density series for information about this point. Indeed, to our knowledge the low-density series has not been used to analyse the critical point since 1965 (\cite{CTM:Gaunt+Fisher}).

Our series is of sufficient length that we can make a reasonably accurate determination of the critical point. To do this, we analysed the magnetisation series
\[M(z) = \frac{d}{dz} \ln \kappa\]
using the method of differential approximants (\cite{CTM:DiffApp}). In short, this method fits a function to the series which satisfies a low-order differential equation with polynomial coefficients, then looks at the singularity structure of the fitted function. The critical exponent of the magnetisation series is $1 - \alpha$, where $\alpha$ is the specific heat exponent. Using homogeneous second-order approximants, we found
\[z_c = 3.79635(9), \alpha = 0.0020(17),\]
where the numbers in brackets are twice the standard deviation of the approximant estimates (though we note that this should not be taken as strict error bounds). These numbers are in line with the commonly held view that this model belongs to the Ising universality class, where $\alpha = 0$. They are also consistent with, though considerably less accurate than, the best estimates attained by high-density analysis (see for example \cite{CTM:Universality}).

A much more accurate determination can be made of the dominant unphysical singularity, also using differential approximants. This point is primarily of interest because it is known (\cite{CTM:Dhar}) that $M(-z)$ is the generating function of directed animals on the body-centred cubic (b.c.c.) lattice. Again using second-order approximants, we derived (where $\gamma$ is the negative exponent of the singularity)
\[z = 0.1193388818(6), \gamma = 0.171(14).\]
In addition, we can then use these estimates to estimate the critical amplitude by solving for various $n$ the equation
\[c_n = A (1/z)^n n^{\gamma-1}\]
and then plotting our results against $1/n$. This gives
\[A = 0.145,\]
though we would hesitate to give an error for this estimate. This results in a formula for the asymptotic growth of the number of directed b.c.c. animals as
\[c_n \sim 0.145 \times 8.379^n n^{-0.829}.\]

\section{Conclusion}\label{sec:conclusion}

In this paper, we have given a large amount of detail as to how to adapt the CTMRG method to derive series expansions, illustrating by calculating 92 terms of the hard squares partition function per site. A number of technical difficulties have been overcome, notably with the use of block eigenvalues.

It is clear that the method is very efficient for calculating series, and we are fairly confident in saying that it appears to be an $O(\alpha^{\sqrt{n}})$ method, at least for hard squares. If this is true, theoretically this represents a vast improvement over all other non-CTM based methods, which are exponential-time.

The CTMRG is by nature a very general method, theoretically applicable to any IRF model. Although in practice certain symmetry requirements are also necessary, we are certain that its scope is not limited to the hard squares model, and believe that it can be successfully applied to many models to derive series.

In particular, although the original formulation of the CTMRG is for spin models, we are currently engaged in adapting it to vertex and bond models, where the `spin' values lie on the bonds of the lattice. If this is successful, this would open up a whole new category of models which we can apply this method to.

\section*{Acknowledgements}

I would like to acknowledge Andrew Rechnitzer and Tony Guttmann for many helpful discussions, proof-reading and support, and MASCOS (Australia) and ANR (France) for their funding.

\appendix

\section{The hard squares partition function}

{\scriptsize
\thispagestyle{empty}
\begin{tabular}{lr}
$n$ & Coefficient of $z^n$ \\
\hline
0 & 	1 \\
1 &  1 \\
2 &  -2 \\
3 &  8 \\
4 &  -40 \\
5 &  225 \\
6 &  -1362 \\
7 &  8670 \\
8 &  -57253 \\
9 &  388802 \\
10 & -2699202 \\
11 & 19076006 \\
12 & -136815282 \\
13 & 993465248 \\
14 & -7290310954 \\
15 & 53986385102 \\
16 & -402957351939 \\
17 & 3028690564108 \\
18 & -22904845414630 \\
19 & 174175863324830 \\
20 & -1331044586131594 \\
21 & 10217222223168657 \\
22 & -78746146809812974 \\
23 & 609153211886323748 \\
24 & -4728123941310119629 \\
25 & 36812657530897835053 \\
26 & -287439461791025474818 \\
27 & 2250314840062625743472 \\
28 & -17660572072127314002800 \\
29 & 138917347311377551474338 \\
30 & -1095044102004611782219794 \\
31 & 8649079543673381406386578 \\
32 & -68441069128808194161922385 \\
33 & 542528768962390004584576547 \\
34 & -4307673277782673209498570830 \\
35 & 34255913017196256622645849406 \\
36 & -272811973711116137449858922289 \\
37 & 2175663718003877171512666515965 \\
38 & -17373555504340949646557187291612 \\
39 & 138907228460715779361866368091340 \\
40 & -1111918671840441187102586337375728 \\
41 & 8910623138600432871714003453719826 \\
42 & -71483639721296620300995136065253668 \\
43 & 574046483511726716038779843196291148 \\
44 & -4614334493396507062886044646610429157 \\
45 & 37125630601616372118866371971750653400 \\
46 & -298967336036118129407418784080218615722 \\
47 & 2409585254960886275025134297727824908884 \\
48 & -19436320799533420112481773783042415629261 \\
49 & 156900573920162022290578129250670083086420 \\
50 & -1267535404869110564352411619174739265523868 \\
51 & 10247264495520354226129600890924108669994610 \\
52 & -82900281399057902108151873229350373092209619 \\
53 & 671108737638240713935447505566478276195162695 \\
54 & -5436355713698184882980887629568939260561976810 \\
55 & 44064694805175046617091994106334371203302938776 \\
56 & -357381432990462157109211753457839726558288436818 \\
57 & 2900163105083745845730874102786889605714746995342 \\
58 & -23547968769944310985596027746076562077618218334718 \\
59 & 191300842560899308725071542860775532916665485120184 \\
60 & -1554908876921181737147789458998481638358161928897591 \\
61 & 12644749797865196843555951541679953304124755960793511 \\
62 & -102878765260913100430709623218183178373090721871184044 \\
63 & 837422752069874189432173744615001497684269956054808622 \\
64 & -6819631352357213787230910331148600271798284788605855567 \\
65 & 55560748326532902393539847760004905789472334071550252840
\end{tabular}

\begin{tabular}{lr}
$n$ & Coefficient of $z^n$ \\
\hline
66 & -452856402854393303217557453525610814293298860266644333788 \\
67 & 3692603110457838500058306311991565464177314769422347258712 \\
68 & -30121676657343800172725754276632905777873884902459021355361 \\
69 & 245807413474990027756549514844805809793849952994063754626736 \\
70 & -2006666888019302099239255549634083582058527531632344348155382 \\
71 & 16387602447917907925939835398242666596145404236646122967314476 \\
72 & -133878529017269645222962908444675311373581968623046805553003420 \\
73 & 1094101349658594729612361555133529988788157463539921142705674406 \\
74 & -8944399559315399008297389300844640091529020258243919220987964946 \\
75 & 73145551008354781012338946915052673817209937064486329418176787110 \\
76 & -598361927886673095890838492285516107595411125515110286142445348366 \\
77 & 4896386322427075789832417441388661081060656557690205676073687691148 \\
78 & -40079251232688073115885503442096697172216907602724038749014964900102 \\
79 & 328165082254688810946984293011935816745304828942186642808616061608366 \\
80 & -2687761676918736119845722240766742417562922822567399217372718456286685 \\
81 & 22019713136269272910821806308408983339580368950205667984685424553851833 \\
82 & -180448000438033885658918793091063759646692376290415150089214756908058884 \\
83 & 1479139461040169224120227477170760620707156578769244001285571905105338718 \\
84 & -12127744073180764329413943869671758523793164836282542085455077710510119460 \\
85 & 99463122933872289081188517872014743137876065832109978317420162733094536685 \\
86 & -815929689380612324987741425248826191973143825540604664462510726155651011716 \\
87 & 6694982697997270547775373671241966492741914685730214539163542377702398519634 \\
88 & -54947741307877958845071184518809752964246445865955568127226073015990501845122 \\
89 & 451077911529567493786366696817726882849261112979605658136036057774968937144332 \\
90 & -3703841147518398777454506425673146945950304284333846865189410967618798579127030 \\
91 & 30419357562872572130630138383935901849475451065777199789899946292436298406543592
\end{tabular}
}

\bibliographystyle{abbrv}
\bibliography{hardsq}

\end{document}